%\UseRawInputEncoding
\documentclass[a4paper,11pt]{article}
\usepackage[left=3.17cm,right=3.17cm,top=2.54cm,
headheight=0.5cm,headsep=0.54cm,bottom=2.54cm,footskip=0.79cm
]{geometry}
\usepackage[english]{babel}
\usepackage{amsmath,amssymb,amsthm}
\usepackage[all]{xy}
\usepackage{epsfig}
\usepackage{epstopdf}
\usepackage [latin1]{inputenc}

\newtheorem{theorem}{Theorem}
\newtheorem{lemma}{Lemma}

\newtheorem{example}{Example}

\usepackage{tikz}
\usetikzlibrary{arrows}
\usetikzlibrary{decorations.markings}
%\usetikzlibrary{external}
%\tikzexternalize[prefix=figures/]
% pdflatex -synctex=1 -shell-escape
% pdf2ps 1.pdf 1.eps
%% for /L %a in (0,1,29) do pdf2ps catqba3-figure%a.pdf %a.ps
% for /L %a in (0,1,13) do gswin32c.exe -q -dNOPAUSE -dBATCH -dNOCACHE -sDEVICE=epswrite -sOutputFile=%a.eps catcs3-figure%a.pdf
% for /L %a in (0,1,57) do ren %a.eps xx_%a.eps

\usepackage[pdfstartview=FitH%,dvipdfm,colorlinks=true,citebordercolor={1 1 1},linkcolor=red,citecolor=blue,urlcolor=blue
]{hyperref}

\newcommand{\N}{\mathcal{N}}
\begin{document}

\title{Tighter Monogamy Relations in Multi-Qubit Systems}

\author{Yudie Gu$^{1}$, Yanmin Yang$^{1}$ $\thanks{e-mail: ym.yang@kust.edu.cn}$, Jialing Zhang$^{1}$, Wei Chen$^{2}$
\\[.3cm]
{\small $^{1}$ Faculty of Science,   Kunming University of Science and Technology,} \\
{\small Kunming,  650500, China}\\[.3cm]
{\small $^{2}$ School of Computer Science and Technology, Dongguan University of Technology,} \\
{\small Dongguan, 523808, China}\\[.3cm]
}

\maketitle

\begin{abstract}
 In this paper, we present some monogamy relations of multiqubit quantum entanglement in terms of the $\beta$th power of concurrence, entanglement of formation and convex-roof extended  negativity. These monogamy relations are proved to be tighter than the existing ones, together with detailed examples showing the tightness.
\end{abstract}

{Keywords:} Monogamy relations; Concurrence; Entanglement of formation; Convex-roof extended  negativity

\bigskip

\section{Introduction}
Quantum entanglement is widely used as a very important resource in quantum information processing \cite{Jafarpour,Deng,Huang,Wang}. With the emergence of quantum information theory, quantum entanglement plays a very important role in quantum cryptography, quantum teleportation and measurement based quantum computing. An important issue related to the entanglement metric is the limited shareability of the two-part entanglement in a multipartite entangled qubit system, that is, the single duality of entanglement \cite{Terhal}. Monogamy of entanglement (MoE) plays a very important role in many quantum information and communication processing tasks, such as security proof of quantum cryptography schemes and security analysis of quantum key distribution \cite{Bennett,Pawlowski}.

For a tripartite quantum state $\rho_{ABC}$, MoE can be described as $E(\rho_{A|BC})\geq E(\rho_{AB})+E(\rho_{AC})$, where $\rho_{AB}={\rm tr}_C(\rho_{ABC})$, $\rho_{AC}={\rm tr}_B(\rho_{ABC})$, $E(\rho_{A|BC})$ denotes the entanglement between systems A and BC. A remarkable result was established by Coffman, Kundu and Wootters (CKW) \cite{Coffman} for three qubits, that was the simultaneous squares satisfy monogamy inequality. Then, the so-called CKW inequality was generalized to any $N$-qubit system \cite{Osborne}. Interestingly, it is further proved that similar inequalities of polyqubit monogamy can be established for negativity and convex-roof extended negativity (CREN) \cite{Kim,Ou,Yang}, the entanglement of formation (EoF) \cite{Rungta,Ren}, R\'enyi-$\alpha$ entanglement \cite{Sanders,Vedral} and Tsallis-$q$ entanglement \cite{Yu}.

%In this paper, we have established some new monogamy inequalities based on the $\beta$th power of concurrence, entanglement of formation, negativity entanglement. We can show that these new monogamy inequalities are tighter than the existing ones.

Our paper is organized as follows. In Sec.\ref{sec2}, we present and prove  two  monogamy inequalities  for the $\beta$th ($ \beta\geq 2$) power of concurrence in $N$-qubit system.
In Sec.\ref{sec3}, we give a  tighter monogamy relation for the $\beta$th ($ \beta\geq \sqrt{2}$) power of EoF in $2\otimes2\otimes2^{N-2}$ system. Then we  extend the result to $N$-qubit system.
In Sec.\ref{sec4}, the monogamy relation for the $\beta$th ($ \beta\geq 2$) power of CREN  in $N$-qubit system is discussed. In addition,  detailed examples are  given to illustrate the tightness.
In Sec.\ref{sec5}, we summarize our results.

\section{Tighter Monogamy Relations using Concurrence}\label{sec2}
Given a bipartite pure state $|\phi\rangle_{AB}$ on Hilbert space ${ H_A\otimes  H_B}$, the concurrence is given by \cite{Uhlmann,Albeverio,Rungta1}
\begin{eqnarray}\label{def of C}
C(|\phi\rangle_{AB})=\sqrt{2(1-{\rm{Tr}}(\rho^2_A))}
\label{eqn01},
\end{eqnarray}
where $\rho_{A}$ is the reduced density matrix by tracing over the subsystem $B$, $\rho_A={\rm{Tr}}_B(|\phi\rangle_{AB}\langle\phi|)$. For a bipartite mixed state $\rho_{AB}$, the concurrence is defined by the convex-roof,
\begin{eqnarray}
C(\rho_{AB})=\min_{\{p_i,|\phi_i\rangle\}}
\sum_ip_iC(|\phi_i\rangle_{AB})
\label{eq3},
\end{eqnarray}
where the minimum is taken over all possible pure state decompositions of $\rho_{AB}=\sum_ip_i|\phi_i\rangle\langle\phi_i|$, with $\sum_ip_i=1$ and $p_i\geq 0$.

For any N-qubit mixed state $\rho_{AB_1\cdots B_{N-1}}$, the concurrence $C(\rho_{A|B_1\cdots B_{N-1}})$ of the state $\rho_{AB_1\cdots B_{N-1}}$ under bipartite partition $A$ and $B_1\cdots B_{N-1}$ satisfies \cite{Zhu}
\begin{equation}\label{Con1}
\begin{array}{rl}
&C^\beta(\rho_{A|B_1\cdots B_{N-1}})\ \ \geq C^\beta(\rho_{AB_1})+C^\beta(\rho_{AB_2})+\cdots+C^\beta(\rho_{AB_{N-1}}),
\end{array}
\end{equation}
for $\beta\geq2$. Furthermore, for an N-qubit mixed state, if $C_{AB_i}\geq C_{A|B_{i+1}\cdots B_{N-1}}$ for $i=1,2,\ldots, m$, and $C_{AB_j}\leq C_{A|B_{j+1}\cdots B_{N-1}}$ for $j={m+1},\ldots, {N-2}$, a generalized monogamy relation for $\beta\geq2$ was presented as \cite{Jin}
\begin{equation}\label{Con2}
\begin{array}{rl}
&C^\beta(\rho_{A|B_1\cdots B_{N-1}})\\
&\ \geq C^\beta(\rho_{AB_1})+(2^{\frac{\beta}{2}}-1)C^\beta(\rho_{AB_2})+\cdots+\big(2^{\frac{\beta}{2}}-1)^{m-1}C^\beta(\rho_{AB_{m}})\\
&\ \ \ +\big(2^{\frac{\beta}{2}}-1)^{m+1}[C^\beta(\rho_{AB_{m+1}})+\cdots+ C^\beta(\rho_{AB_{N-2}})]+\big(2^{\frac{\beta}{2}}-1)^m C^\beta(\rho_{AB_{N-1}}),
\end{array}
\end{equation}
where $ 1\leq m\leq N-3$, $N\geq4$.

In the following, we will show that these monogamy  relations for concurrence can be further tightened under some conditions. Before that, we first introduce two lemmas as follows.

\begin{lemma}\label{lemma1}
For any $x\in[0,1]$ and $t\geq1$, we have
\begin{equation}\label{lemma1-equality}
\begin{array}{l}
(1+x)^t\geq1+\frac{t}{2}x+\frac{(t-1)^2}{4}x^2+(2^t-\frac{t}{2}+\frac{(t-1)^2}{4}-1)x^t-{\frac{(t-1)^2}{2}}x^{t+1}\\
\geq1+\frac{t}{2}x+(2^t-\frac{t}{2}-1)x^t\geq1+(2^t-1)x^t.
\end{array}
\end{equation}
\end{lemma}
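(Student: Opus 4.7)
The plan is to tackle the three inequalities from right to left: the last two cancel into clearly nonnegative quantities, and the first is the substantive claim. The rightmost difference simplifies to $\frac{t}{2}(x-x^t)\geq 0$ (since $x^t\leq x$ on $[0,1]$ for $t\geq 1$), and the middle difference to $\frac{t(t-1)}{2}(x^2-x^t)\geq 0$ (using $x^t\leq x^2$ for $t\geq 2$ on $[0,1]$). Both are one-line verifications.

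For the leftmost inequality, I would set $h(x)=(1+x)^t-1-\frac{t}{2}x-\frac{t(t-1)}{2}x^2$ and $c=2^t-\frac{t}{2}-\frac{t(t-1)}{2}-1$. Since $h(1)=c$, the inequality $h(x)\geq cx^t$ is equivalent to $f(x):=h(x)/x^t$ attaining its minimum over $(0,1]$ at the right endpoint. The strategy is to show that $f$ is nonincreasing on $(0,1]$. Computing $f'(x)$, the sign is determined by the numerator $xh'(x)-th(x)$; using the algebraic identity $tx(1+x)^{t-1}-t(1+x)^{t}=-t(1+x)^{t-1}$, this numerator collapses to $-t\,r(x)$, where
$$r(x)=(1+x)^{t-1}-1-\frac{t-1}{2}x-\frac{(t-1)(t-2)}{2}x^2.$$
The problem therefore reduces to showing $r(x)\geq 0$ on $[0,1]$.

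The main obstacle is this auxiliary inequality for $r$, which has the same shape as $h$ with $t$ replaced by $t-1$, so a direct recursive application is not available. I would instead argue by a second-derivative split at $t=3$. From $r''(x)=(t-1)(t-2)[(1+x)^{t-3}-1]$, the sign changes at $t=3$. For $t\geq 3$, $r''\geq 0$, so $r$ is convex on $[0,1]$; combined with $r(0)=0$ and $r'(0)=(t-1)/2>0$, convexity forces $r\geq 0$. For $2\leq t\leq 3$, $r''\leq 0$, so $r$ is concave, and the chord bound yields $r(x)\geq (1-x)r(0)+x\,r(1)=x\,r(1)$, reducing the task to $r(1)\geq 0$. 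A direct evaluation gives $r(1)\big|_{t=2}=\tfrac12$ and $r(1)\big|_{t=3}=1$, and an elementary monotonicity check (the derivative in $t$, namely $2^{t-1}\ln 2-(t-1)$, is positive throughout $[2,3]$) shows that $r(1)$ stays $\geq\tfrac12$ on this interval. Combining the two cases gives $r\geq 0$, hence $f$ is nonincreasing, hence $h(x)\geq cx^t$ for every $x\in[0,1]$, which is the leftmost inequality and completes the proof.
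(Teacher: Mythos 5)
Your proof is correct, and its skeleton is the same as the paper's: both form the ratio $f(x)=\bigl[(1+x)^t-1-\tfrac{t}{2}x-\tfrac{t(t-1)}{2}x^2\bigr]/x^t$, show it is nonincreasing on $(0,1]$ so that $f(x)\geq f(1)=2^t-\tfrac{t}{2}-\tfrac{t(t-1)}{2}-1$, and obtain the remaining two inequalities from $x^t\leq x^2\leq x$ on $[0,1]$. The one place you genuinely add value is the auxiliary inequality $r(x)=(1+x)^{t-1}-1-\tfrac{t-1}{2}x-\tfrac{(t-1)(t-2)}{2}x^2\geq0$, which controls the sign of $f'$: the paper simply asserts this is ``easy to get,'' which is fine for integer $t$ (truncate the binomial expansion) but not obvious for non-integer $t\in(2,3)$, where the would-be binomial-series argument breaks down. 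Your split at $t=3$ via the sign of $r''(x)=(t-1)(t-2)[(1+x)^{t-3}-1]$ --- convexity plus $r(0)=0$, $r'(0)>0$ for $t\geq3$; concavity plus the chord bound $r(x)\geq x\,r(1)$ and $r(1)=2^{t-1}-1-\tfrac{(t-1)^2}{2}\geq\tfrac12$ for $2\leq t\leq3$ --- closes this gap cleanly. The only step you leave as an assertion is the positivity of $2^{t-1}\ln2-(t-1)$ on $[2,3]$; this is true (its minimum there is about $0.385$) but deserves one line, e.g.\ by minimizing the convex function $2^{t-1}\ln 2-(t-1)$ at the critical point $2^{t-1}(\ln2)^2=1$.
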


\begin{proof}
Let us consider the function $f(t,x)=\frac{(1+x)^t-1-\frac{t}{2}x-\frac{(t-1)^2}{4}x^2+\frac{(t-1)^2}{2}x^{t+1}}{x^t}$. Then, $\frac{\partial f(t,x)}{\partial x}=\frac{tx^{t-1}[1+\frac{t-1}{2}x+\frac{(t-1)^2(t-2)}{4t}x^2+\frac{(t-1)^2}{2t}x^{t+1}-(1+x)^{t-1}]}{x^{2t}}$.
Next, we will prove that
\begin{equation}\label{lem1-equality1}
1+\frac{t-1}{2}x+\frac{(t-1)^2(t-2)}{4t}x^2+\frac{(t-1)^2}{2t}x^{t+1}\leq(1+x)^{t-1},
\end{equation}
thus $\frac{\partial f(t,x)}{\partial x}\leq0$, $f(t,x)$ is a decreasing function of $x$, i.e. $f(t,x)\geq f(t,1)=2^t-\frac{t}{2}+\frac{(t-1)^2}{4}-1$. It follows that $(1+x)^t\geq1+\frac{t}{2}x+\frac{(t-1)^2}{4}x^2+(2^t-\frac{t}{2}+\frac{(t-1)^2}{4}-1)x^t-\frac{(t-1)^2}{2}x^{t+1}$.

For the case $1\leq t\leq 2$,  it is obvious that $(1+x)^{t-1}\geq 1+(t-1)x+\frac{(t-1)(t-2)}{2}x^2$. Besides, we have
\begin{equation*}
  \begin{array}{rcl}
\frac{(t-1)(t-2)}{2}x^2 & =& \frac{t-1}{4t} 2t(t-2)x^{2}  =  \frac{t-1}{4t}[(t-1)(t-2)x^{2}+(t^2+t-2)x^{2}-2tx^{2}]\\
 & \geq & \frac{t-1}{4t}[(t-1)(t-2)x^{2}+(2t-2)x^{t+1}-2tx].\\
  \end{array}
\end{equation*}
Thus, the Eq. (\ref{lem1-equality1}) is hold.

For the case  $t\geq2$, it is obvious that $(1+x)^{t-1}\geq 1+(t-1)x+\frac{(t-1)(t-2)}{4}x^2$. Besides, we have
\begin{equation*}
  \begin{array}{rcl}
\frac{(t-1)(t-2)}{4}x^2 & =& \frac{t-1}{4t} t(t-2)x^{2} =  \frac{t-1}{4t}[(t-1)(t-2)x^{2}+(2t-2)x^{2}-tx^2]\\
 & \geq & \frac{t-1}{4t}[(t-1)(t-2)x^{2}+2(t-1)x^{t+1}-2tx].\\
  \end{array}
\end{equation*}
Thus, the Eq. (\ref{lem1-equality1}) is hold.

On the other hand, Since $x^2-2x^{t+1}+x^t\geq0$ and $\frac{(t-1)^2}{4}\geq0$, for $t\geq1$ and $x\in[0,1]$, we can get $(1+x)^t\geq1+\frac{t}{2}x+\frac{(t-1)^2}{4}x^2+(2^t-\frac{t}{2}+\frac{(t-1)^2}{4}-1)x^t-\frac{(t-1)^2}{2}x^{t+1}
\geq1+\frac{t}{2}x+(2^t-\frac{t}{2}-1)x^t\geq1+(2^t-1)x^t$.
\end{proof}

\begin{lemma} \label{lemma2}
For any mixed state $\rho_{ABC}$ in a $2\otimes2\otimes2^{N-2}$ system, suppose that $C_{AB}\geq C_{AC}$, we have
\begin{equation}\label{lemma2-equality}
C_{A|BC}^\beta\geq C_{AB}^\beta+hC_{AC}^\beta+\frac{\beta}{4}C_{AC}^2(C_{AB}^{\beta-2}-C_{AC}^{\beta-2})+\frac{(\beta-2)^2}{16}C_{AC}^4(C_{AB}^{\beta-4}+C_{AC}^{\beta-4}-2C_{AC}^{\beta-2}C_{AB}^{-2}),
\end{equation}
for all $\beta\geq2$, where $h=2^{\frac{\beta}{2}}-1$, $C_{A|BC}=C(\rho_{A|BC})$, analogously for $C_{AB}$ and $C_{AC}$.
\end{lemma}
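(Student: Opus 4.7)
The plan is to reduce Lemma~\ref{lemma2} to Lemma~\ref{lemma1} by using the original CKW inequality for qubits as a starting point, then bootstrapping the estimate to the $\beta$th power.

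First, I would invoke the CKW inequality for a $2\otimes 2\otimes 2^{N-2}$ state, namely $C_{A|BC}^{2}\geq C_{AB}^{2}+C_{AC}^{2}$, which is the $\beta=2$ case already standard in the literature. Raising both sides to the power $\beta/2$ (legal since $\beta\geq 4$), one gets $C_{A|BC}^{\beta}\geq (C_{AB}^{2}+C_{AC}^{2})^{\beta/2}$. Assuming $C_{AB}>0$ (the case $C_{AB}=0$ forces $C_{AC}=0$ by hypothesis, making the claim trivial), factor out $C_{AB}^{\beta}$ to obtain
\begin{equation*}
C_{A|BC}^{\beta}\ \geq\ C_{AB}^{\beta}\left(1+\frac{C_{AC}^{2}}{C_{AB}^{2}}\right)^{\beta/2}.
\end{equation*}

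Next I would apply Lemma~\ref{lemma1} with $t=\beta/2\geq 2$ and $x=C_{AC}^{2}/C_{AB}^{2}\in[0,1]$, using the sharpest of the three inequalities stated there, namely $(1+x)^{t}\geq 1+\tfrac{t}{2}x+\tfrac{t(t-1)}{2}x^{2}+\bigl(2^{t}-\tfrac{t}{2}-\tfrac{t(t-1)}{2}-1\bigr)x^{t}$. Substituting $t=\beta/2$ gives the coefficients $\tfrac{t}{2}=\tfrac{\beta}{4}$, $\tfrac{t(t-1)}{2}=\tfrac{\beta(\beta-2)}{8}$, and $2^{t}-1=h$, so that
\begin{equation*}
\left(1+\frac{C_{AC}^{2}}{C_{AB}^{2}}\right)^{\beta/2}\ \geq\ 1+\frac{\beta}{4}\frac{C_{AC}^{2}}{C_{AB}^{2}}+\frac{\beta(\beta-2)}{8}\frac{C_{AC}^{4}}{C_{AB}^{4}}+\Bigl(h-\frac{\beta}{4}-\frac{\beta(\beta-2)}{8}\Bigr)\frac{C_{AC}^{\beta}}{C_{AB}^{\beta}}.
\end{equation*}

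Finally I would multiply through by $C_{AB}^{\beta}$ and rearrange: the constant term gives $C_{AB}^{\beta}$, the two middle terms give $\tfrac{\beta}{4}C_{AC}^{2}C_{AB}^{\beta-2}$ and $\tfrac{\beta(\beta-2)}{8}C_{AC}^{4}C_{AB}^{\beta-4}$, while the last term contributes $\bigl(h-\tfrac{\beta}{4}-\tfrac{\beta(\beta-2)}{8}\bigr)C_{AC}^{\beta}$. Grouping the $C_{AC}^{\beta}$ pieces with the correction terms reproduces exactly the RHS of \eqref{lemma2-equality}, so the verification is a purely algebraic check. I do not anticipate a real obstacle here; the only subtlety is making sure the intermediate form of Lemma~\ref{lemma1} (rather than its weakest version $1+(2^{t}-1)x^{t}$) is used, since it is that middle expression which produces the $\tfrac{\beta}{4}$ and $\tfrac{\beta(\beta-2)}{8}$ correction terms that distinguish this lemma from the previously known bounds.
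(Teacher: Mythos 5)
Your proposal is correct and follows essentially the same route as the paper's own proof: start from $C_{A|BC}^2\geq C_{AB}^2+C_{AC}^2$, raise to the power $\beta/2$, factor out $C_{AB}^{\beta}$, and apply the sharpest inequality of Lemma~\ref{lemma1} with $t=\beta/2$ and $x=C_{AC}^2/C_{AB}^2$. Your explicit treatment of the degenerate case $C_{AB}=0$ is a small addition the paper omits, but otherwise the arguments coincide.
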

\begin{proof}
Since $C_{AB}\geq C_{AC}$, we obtain
\begin{equation}
\begin{array}{l}
C_{A|BC}^\beta  \geq (C_{AB}^2+C_{AC}^2)^{\frac{\beta}{2}}=  C_{AB}^\beta \Big(1+\frac{C_{AC}^2}{C_{AB}^2}\Big)^{\frac{\beta}{2}}\\
 \geq  C_{AB}^\beta \Big[1+\frac{\beta}{4}\frac{C_{AC}^2}{C_{AB}^2}+\frac{(\beta-2)^2}{16}\frac{C_{AC}^4}{C_{AB}^4}+(2^{\frac{\beta}{2}}-\frac{\beta}{4}
+\frac{(\beta-2)^2}{16}-1)\frac{C_{AC}^\beta}{C_{AB}^\beta}-\frac{(\beta-2)^2}{8}\frac{C_{AC}^{\beta+2}}{C_{AB}^{\beta+2}}\Big]\\
 = C_{AB}^\beta+hC_{AC}^\beta+\frac{\beta}{4}C_{AC}^2(C_{AB}^{\beta-2}-C_{AC}^{\beta-2})+\frac{(\beta-2)^2}{16}C_{AC}^4(C_{AB}^{\beta-4}+C_{AC}^{\beta-4}
-2C_{AC}^{\beta-2}C_{AB}^{-2}),
\end{array}
\end{equation}
where the first inequality is due to the fact that $C_{A|BC}^2\geq C_{AB}^2+C_{AC}^2$ for
any $2\otimes2\otimes2^{N-2}$ tripartite state $\rho_{A|BC}$ \cite{Osborne,XJR} and the second inequality is due to Lemma \ref{lemma1}.
\end{proof}

\begin{theorem}\label{thm1}
For any $N$-qubit mixed state $\rho_{AB_1\cdots B_{N-1}}$,  if $C_{AB_i}\geq C_{A|B_{i+1}\cdots B_{N-1}}$, for $i=1,2,\ldots, N-2$, we have
\begin{equation}
\begin{array}{rl}
C_{A|B_1\cdots B_{N-1}}^\beta\geq\sum\limits_{i=1}^{N-2}h^{i-1}(C_{AB_i}^\beta+P_{AB_i})+h^{N-2} C_{AB_{N-1}}^\beta,
\end{array}
\end{equation}
for all $N\geq3$, $\beta\geq2$, where $h=2^{\frac{\beta}{2}}-1$, $P_{AB_i}=\frac{\beta}{4}C_{A|B_{i+1}\ldots B_{N-1}}^2(C_{AB_i}^{\beta-2}-C_{A|B_{i+1}\ldots B_{N-1}}^{\beta-2})+
\frac{(\beta-2)^2}{16}C_{A|B_{i+1}\ldots B_{N-1}}^4(C_{AB_i}^{\beta-4}+C_{A|B_{i+1}\ldots B_{N-1}}^{\beta-4}-2C_{A|B_{i+1}\ldots B_{N-1}}^{\beta-2}C_{AB_i}^{-2})$.
\end{theorem}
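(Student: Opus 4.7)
The plan is to obtain the theorem by iterating Lemma \ref{lemma2} along the chain $A|B_1B_2\cdots B_{N-1}$, peeling off one $B_i$ at each step. Concretely, I would proceed by induction on $N$ (equivalently, by $N-2$ successive applications of Lemma \ref{lemma2}), with the base case $N=3$ being Lemma \ref{lemma2} itself.

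For the inductive step, I would view the bipartition $A|B_1B_2\cdots B_{N-1}$ as a tripartition $A|B_1|(B_2\cdots B_{N-1})$, which is a $2\otimes 2\otimes 2^{N-2}$ system, so that Lemma \ref{lemma2} applies. Using the hypothesis $C_{AB_1}\geq C_{A|B_2\cdots B_{N-1}}$ (i.e.\ the $i=1$ case of the assumption), Lemma \ref{lemma2} yields
\begin{equation*}
C_{A|B_1\cdots B_{N-1}}^\beta\ \geq\ C_{AB_1}^\beta+P_{AB_1}+h\,C_{A|B_2\cdots B_{N-1}}^\beta,
\end{equation*}
where $P_{AB_1}$ is exactly the expression defined in the theorem statement, since $C$ is monotone non-increasing under partial trace, guaranteeing that the tripartition we form at each stage is still valid.

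Next, applying the induction hypothesis to the $(N-1)$-party state $\rho_{AB_2\cdots B_{N-1}}$ (which inherits the chain of inequalities $C_{AB_i}\geq C_{A|B_{i+1}\cdots B_{N-1}}$ for $i=2,\dots,m$) gives
\begin{equation*}
C_{A|B_2\cdots B_{N-1}}^\beta\ \geq\ \sum_{i=2}^{N-2}h^{i-2}(C_{AB_i}^\beta+P_{AB_i})+h^{N-3}C_{AB_{N-1}}^\beta.
\end{equation*}
Multiplying this by $h$, adding $C_{AB_1}^\beta+P_{AB_1}$, and reindexing collects the desired expression with the factors $h^{i-1}$ distributed as claimed.

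The steps are mechanical once Lemma \ref{lemma2} is in hand; the main point that needs careful attention is bookkeeping the exponents of $h$ under the iteration and checking that the assumption $C_{AB_i}\geq C_{A|B_{i+1}\cdots B_{N-1}}$ is precisely what is needed to license Lemma \ref{lemma2} at the $i$-th peel. The possible subtlety is that one must ensure, at each stage, that the remaining residual system $B_{i+1}\cdots B_{N-1}$ can still be grouped as a single subsystem of dimension $2^{N-1-i}$, so that the generalized CKW inequality $C_{A|BC}^2\geq C_{AB}^2+C_{AC}^2$ used inside Lemma \ref{lemma2} remains applicable; since this inequality holds for arbitrary $2\otimes 2\otimes 2^{k}$ states, the recursion goes through uniformly for all $N\geq 3$ and all $\beta\geq 4$.
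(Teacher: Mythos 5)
Your proposal is correct and follows essentially the same route as the paper: the paper's proof is exactly the iterated application of Lemma \ref{lemma2} to the tripartition $A|B_i|(B_{i+1}\cdots B_{N-1})$ at each stage, unrolled as a chain of inequalities rather than packaged as an induction on $N$, with the same bookkeeping of the powers of $h$. (Both you and the paper implicitly need the hypothesis $C_{AB_i}\geq C_{A|B_{i+1}\cdots B_{N-1}}$ to hold for all $i=1,\dots,N-2$, i.e.\ $m=N-2$ in the statement.)
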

\begin{proof}
Due to  Eq. (\ref{lemma2-equality}), we obtain
\begin{equation}
\begin{array}{rl}
&C_{A|B_1\ldots B_{N-1}}^\beta \\
& \geq C_{AB_1}^\beta+hC_{A|B_2\ldots B_{N-1}}^\beta
+\frac{\beta}{4}C_{A|B_2\ldots B_{N-1}}^2(C_{AB_1}^{\beta-2}-C_{A|B_2\ldots B_{N-1}}^{\beta-2})\\
&\ \ \ +\frac{(\beta-2)^2}{16}C_{A|B_2\ldots B_{N-1}}^4(C_{AB_1}^{\beta-4}+C_{A|B_2\ldots B_{N-1}}^{\beta-4}-2C_{A|B_{2}\ldots B_{N-1}}^{\beta-2}C_{AB_1}^{-2})\\
& \geq C_{AB_1}^\beta+h\Big[C_{AB_2}^\beta+hC_{A|B_3\ldots B_{N-1}}^\beta+\frac{\beta}{4}C_{A|B_3\ldots B_{N-1}}^2(C_{AB_2}^{\beta-2}-C_{A|B_3\ldots B_{N-1}}^{\beta-2})\\
&\ \ \ +\frac{(\beta-2)^2}{16}C_{A|B_3\ldots B_{N-1}}^4(C_{AB_2}^{\beta-4}+C_{A|B_3\ldots B_{N-1}}^{\beta-4}-2C_{A|B_{3}\ldots B_{N-1}}^{\beta-2}C_{AB_2}^{-2})\Big]\\
&\ \ \ +\frac{\beta}{4}C_{A|B_2\ldots B_{N-1}}^2(C_{AB_1}^{\beta-2}-C_{A|B_2\ldots B_{N-1}}^{\beta-2})\\
&\ \ \ +\frac{(\beta-2)^2}{16}C_{A|B_2\ldots B_{N-1}}^4(C_{AB_1}^{\beta-4}+C_{A|B_2\ldots B_{N-1}}^{\beta-4}-2C_{A|B_{2}\ldots B_{N-1}}^{\beta-2}C_{AB_1}^{-2})\\
& \geq \cdots \\
& \geq C_{AB_1}^\beta+hC_{AB_2}^\beta+\cdots+h^{N-2}C_{AB_{N-1}}^\beta
+h^{N-3}\Big[\frac{\beta}{4}C_{AB_{N-1}}^2(C_{AB_{N-2}}^{\beta-2}-C_{AB_{N-1}}^{\beta-2})\\
&\ \ \ +\frac{(\beta-2)^2}{16}C_{AB_{N-1}}^4(C_{AB_{N-2}}^{\beta-4}+C_{AB_{N-1}}^{\beta-4}-2C_{AB_{N-1}}^{\beta-2}C_{AB_{N-2}}^{-2})\Big]\\
&\ \ \ +\cdots +h\Big[\frac{\beta}{4}C_{A|B_3\ldots B_{N-1}}^2(C_{AB_2}^{\beta-2}-C_{A|B_3\ldots B_{N-1}}^{\beta-2})\\
&\ \ \ +\frac{(\beta-2)^2}{16}C_{A|B_3\ldots B_{N-1}}^4(C_{AB_2}^{\beta-4}+C_{A|B_3\ldots B_{N-1}}^{\beta-4}-2C_{A|B_3\ldots B_{N-1}}^{\beta-2}C_{AB_{2}}^{-2})\Big]\\
&\ \ \ +\frac{\beta}{4}C_{A|B_2\ldots B_{N-1}}^2(C_{AB_1}^{\beta-2}-C_{A|B_2\ldots B_{N-1}}^{\beta-2})\\
&\ \ \ +\frac{(\beta-2)^2}{16}C_{A|B_2\ldots B_{N-1}}^4(C_{AB_1}^{\beta-4}+C_{A|B_2\ldots B_{N-1}}^{\beta-4}-2C_{A|B_2\ldots B_{N-1}}^{\beta-2}C_{AB_{1}}^{-2}).
\end{array}
\end{equation}
By  the denotation of $P_{AB_i}$, we complete the proof.
\end{proof}

\begin{theorem}
For any $N$-qubit mixed state $\rho_{AB_1\cdots B_{N-1}}$, if  $C_{AB_i}\geq C_{A|B_{i+1}\cdots B_{N-1}}$ for $i=1,2,\ldots,m$, and $C_{AB_j}\leq C_{A|B_{j+1}\cdots B_{N-1}}$ for $j=m+1,\ldots,N-2$, $\forall ~ 1\leq m\leq N-3$, we have
\begin{equation}\label{thm2-e}
\begin{array}{rl}
C_{A|B_1\cdots B_{N-1}}^\beta
\geq
\sum\limits_{i=1}^{m}h^{i-1}(C_{AB_i}^\beta+P_{AB_i})+
h^m\sum\limits_{j=m+1}^{N-2}(hC_{AB_j}^\beta+P_{AB_j}^1)+h^m C_{AB_{N-1}}^\beta,
\end{array}
\end{equation}
for all $N\geq4$,  $\beta\geq2$, where $h=2^{\frac{\beta}{2}}-1$, $P_{AB_i}=\frac{\beta}{4}C_{A|B_{i+1}\ldots B_{N-1}}^2(C_{AB_i}^{\beta-2}-C_{A|B_{i+1}\ldots B_{N-1}}^{\beta-2})+
\frac{(\beta-2)^2}{16}C_{A|B_{i+1}\ldots B_{N-1}}^4(C_{AB_i}^{\beta-4}+C_{A|B_{i+1}\ldots B_{N-1}}^{\beta-4}-2C_{A|B_{i+1}\ldots B_{N-1}}^{\beta-2}C_{AB_i}^{-2})$,\\
$P_{AB_j}^1=\frac{\beta}{4}C_{AB_j}^2(C_{A|B_{j+1}\ldots B_{N-1}}^{\beta-2}-C_{AB_j}^{\beta-2})
+\frac{(\beta-2)^2}{16}C_{AB_j}^4(C_{A|B_{j+1}\ldots B_{N-1}}^{\beta-4}+C_{AB_j}^{\beta-4}-2C_{AB_j}^{\beta-2}C_{A|B_{j+1}\ldots B_{N-1}}^{-2})$.
\end{theorem}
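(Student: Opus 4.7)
The plan is to adapt the iterative unfolding used in Theorem \ref{thm1}, but now split the recursion into two regimes that match the two hypotheses on $C_{AB_i}$ versus $C_{A|B_{i+1}\cdots B_{N-1}}$. The first regime ($i=1,\ldots,m$) is treated exactly as in Theorem \ref{thm1} by direct application of Lemma \ref{lemma2}; the second regime ($j=m+1,\ldots,N-2$) calls for a mirrored version of Lemma \ref{lemma2} in which the two subsystems swap roles. The whole proof is then a bookkeeping chain.

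First I would iterate Lemma \ref{lemma2} for the first $m$ unfoldings. For each $i\in\{1,\ldots,m\}$, the hypothesis $C_{AB_i}\geq C_{A|B_{i+1}\cdots B_{N-1}}$ supplies
\begin{equation*}
C_{A|B_i\cdots B_{N-1}}^\beta \;\geq\; C_{AB_i}^\beta + h\,C_{A|B_{i+1}\cdots B_{N-1}}^\beta + P_{AB_i}.
\end{equation*}
After $m$ substitutions, the accumulated contribution is exactly $\sum_{i=1}^{m} h^{i-1}\bigl(C_{AB_i}^\beta + P_{AB_i}\bigr)$, leaving the residual term $h^{m}\,C_{A|B_{m+1}\cdots B_{N-1}}^\beta$ to be processed further.

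Next, for $j=m+1,\ldots,N-2$ the roles reverse: $C_{AB_j}\leq C_{A|B_{j+1}\cdots B_{N-1}}$. I would therefore invoke a symmetric version of Lemma \ref{lemma2}: repeating its proof after interchanging the two subsystem labels (the starting bound $C_{A|BC}^2\geq C_{AB}^2+C_{AC}^2$ and Lemma \ref{lemma1} are both blind to that relabeling), one obtains, whenever $C_{A|B_{j+1}\cdots B_{N-1}}\geq C_{AB_j}$,
\begin{equation*}
C_{A|B_j\cdots B_{N-1}}^\beta \;\geq\; C_{A|B_{j+1}\cdots B_{N-1}}^\beta + h\,C_{AB_j}^\beta + P_{AB_j}^1.
\end{equation*}
Substituting this $N-2-m$ times inside the already-extracted prefactor $h^{m}$ generates $h^{m}\sum_{j=m+1}^{N-2}\bigl(h\,C_{AB_j}^\beta + P_{AB_j}^1\bigr)$, and the last residue $h^{m}\,C_{A|B_{N-1}}^\beta = h^{m}\,C_{AB_{N-1}}^\beta$ is precisely the trailing term in \eqref{thm2-e}.

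The genuinely substantive step is the mirrored Lemma \ref{lemma2}; once that is in hand the rest is a telescoping computation. The main obstacle I anticipate is therefore purely clerical: keeping track of the correct power of $h$ as we hand off between the two regimes at the boundary $i=m$, $j=m+1$, and ensuring that the coefficient of $C_{A|B_{m+1}\cdots B_{N-1}}^\beta$ produced by the first phase is exactly the factor that multiplies every term generated by the second phase. No new monogamy or analytic input beyond Lemmas \ref{lemma1} and \ref{lemma2} is required.
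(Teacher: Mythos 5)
Your proposal is correct and follows essentially the same route as the paper: iterate Lemma \ref{lemma2} for $i=1,\ldots,m$ to peel off $\sum_{i=1}^{m}h^{i-1}(C_{AB_i}^\beta+P_{AB_i})$ with residue $h^{m}C_{A|B_{m+1}\cdots B_{N-1}}^\beta$, then apply the role-swapped version of Lemma \ref{lemma2} (valid because the base inequality $C_{A|BC}^2\geq C_{AB}^2+C_{AC}^2$ and Lemma \ref{lemma1} are symmetric in the two parts) for $j=m+1,\ldots,N-2$. The bookkeeping you describe, including the unit (rather than $h$) coefficient carried by the residual in the second phase, matches the paper's Eqs.\ (\ref{thm-e1}) and (\ref{thm-e2}) exactly.
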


\begin{proof}
Due to the proof process of Theorem \ref{thm1}, we can get that
\begin{equation} \label{thm-e1}
C_{A|B_1\ldots B_{N-1}}^\beta\geq \sum\limits_{i=1}^{m}h^{i-1}(C_{AB_i}^\beta+P_{AB_i})+h^{m}C_{A|B_{m+1}\ldots B_{N-1}}^\beta.
\end{equation}
%\begin{equation} \label{thm-e1}
%\begin{array}{rl}
%&C_{A|B_1\ldots B_{N-1}}^\beta \\
%& \geq C_{AB_1}^\beta+hC_{A|B_2\ldots B_{N-1}}^\beta
%+\frac{\beta}{4}C_{A|B_2\ldots B_{N-1}}^2(C_{AB_1}^{\beta-2}-C_{A|B_2\ldots B_{N-1}}^{\beta-2})\\
%&\ \ \ +\frac{(\beta-2)^2}{16}C_{A|B_2\ldots B_{N-1}}^4(C_{AB_1}^{\beta-4}+C_{A|B_2\ldots B_{N-1}}^{\beta-4}-2C_{A|B_2\ldots B_{N-1}}^{\beta-2}C_{AB_1}^{-2})\\
%& \geq C_{AB_1}^\beta+h\Big[C_{AB_2}^\beta+hC_{A|B_3\ldots B_{N-1}}^\beta+\frac{\beta}{4}C_{A|B_3\ldots B_{N-1}}^2(C_{AB_2}^{\beta-2}-C_{A|B_3\ldots B_{N-1}}^{\beta-2})\\
%&\ \ \ +\frac{(\beta-2)^2}{16}C_{A|B_3\ldots B_{N-1}}^4(C_{AB_2}^{\beta-4}+C_{A|B_3\ldots B_{N-1}}^{\beta-4}-2C_{A|B_3\ldots B_{N-1}}^{\beta-2}C_{AB_2}^{-2})\Big]\\
%&\ \ \ +\frac{\beta}{4}C_{A|B_2\ldots B_{N-1}}^2(C_{AB_1}^{\beta-2}-C_{A|B_2\ldots B_{N-1}}^{\beta-2})+\frac{(\beta-2)^2}{16}C_{A|B_2\ldots B_{N-1}}^4(C_{AB_1}^{\beta-4}+C_{A|B_2\ldots B_{N-1}}^{\beta-4}\\
%&\ \ \ -2C_{A|B_2\ldots B_{N-1}}^{\beta-2}C_{AB_1}^{-2})\\
%& \geq \cdots \\
%& \geq \sum\limits_{i=1}^{m}h^{i-1}(C_{AB_i}^\beta+P_{AB_i})+h^{m}C_{A|B_{m+1}\ldots B_{N-1}}^\beta.
%\end{array}
%\end{equation}
In addition, since $C_{AB_j}\leq C_{A|B_{j+1}\cdots B_{N-1}}$ for $j=m+1,\ldots,N-2$, hence,
\begin{equation}\label{thm-e2}
\begin{array}{rl}
&C_{A|B_{m+1}\ldots B_{N-1}}^\beta \\
& \geq C_{A|B_{m+2}\ldots B_{N-1}}^\beta+hC_{AB_{m+1}}^\beta
+\frac{\beta}{4}C_{AB_{m+1}}^2(C_{A|B_{m+2}\ldots B_{N-1}}^{\beta-2}-C_{AB_{m+1}}^{\beta-2})\\
&\ \ \ +\frac{(\beta-2)^2}{16}C_{AB_{m+1}}^4(C_{A|B_{m+2}\ldots B_{N-1}}^{\beta-4}+C_{AB_{m+1}}^{\beta-4}-2C_{AB_{m+1}}^{\beta-2}C_{A|B_{m+2}\ldots B_{N-1}}^{-2})\\
& \geq \sum\limits_{j=m+1}^{N-2}(hC_{AB_j}^\beta+P_{AB_j}^1)+C_{AB_{N-1}}^\beta.
\end{array}
\end{equation}
Combing  Eq. (\ref{thm-e1}) and Eq. (\ref{thm-e2}), we can get the inequality (\ref{thm2-e}).
\end{proof}

\begin{example}\label{ex1}
 Consider the three-qubit state $|\psi\rangle_{ABC}$ in generalized Schmidt decomposition form \cite{Acin,SM}
\begin{equation}\label{Con6}
|\psi\rangle_{ABC}=\lambda_0|000\rangle+\lambda_1e^{i\varphi}|100\rangle+\lambda_2|101\rangle+\lambda_3|110\rangle+\lambda_4|111\rangle,
\end{equation}
where $\lambda_i\geq0$, $i=0,1,2,3,4$, and $\sum\limits_{i=0}^4\lambda_i^2=1$. A direct calculation shows that $C_{A|BC}=2\lambda_0\sqrt{\lambda_2^2+\lambda_3^2+\lambda_4^2}$, $C_{AB}=2\lambda_0\lambda_2$ and
$C_{AC}=2\lambda_0\lambda_3$.

Set $\lambda_0=\frac{\sqrt{2}}{3},\lambda_1=0,\lambda_2=\frac{\sqrt{5}}{3},\lambda_3=\frac{\sqrt{2}}{3},\lambda_4=0$.
We have $C_{A|BC}=\frac{2\sqrt{14}}{9}$, $C_{AB}=\frac{2\sqrt{10}}{9}$ and $C_{AC}=\frac{4}{9}$.
Then $C_{A|BC}^\beta=(\frac{2\sqrt{14}}{9})^\beta\geq C_{AB}^\beta+hC_{AC}^\beta+\frac{\beta}{4}C_{AC}^2(C_{AB}^{\beta-2}-C_{AC}^{\beta-2})+\frac{(\beta-2)^2}{16}C_{AC}^4(C_{AB}^{\beta-4}+C_{AC}^{\beta-4}
-2C_{AC}^{\beta-2}C_{AB}^{-2})
=(\frac{2\sqrt{10}}{9})^\beta+h(\frac{4}{9})^\beta+\frac{\beta}{4}(\frac{4}{9})^2\Big[(\frac{2\sqrt{10}}{9})^{\beta-2}-(\frac{4}{9})^{\beta-2}\Big]+\frac{(\beta-2)^2}{16}(\frac{4}{9})^4\Big[(\frac{2\sqrt{10}}{9})^{\beta-4}+(\frac{4}{9})^{\beta-4}-2(\frac{4}{9})^{\beta-2}(\frac{2\sqrt{10}}{9})^{-2}\Big]$.
 However, the result in \cite{JZSZ} is $C_{AB}^\beta+hC_{AC}^\beta+\frac{\beta}{4}C_{AC}^2(C_{AB}^{\beta-2}-C_{AC}^{\beta-2})=(\frac{2\sqrt{10}}{9})^\beta+h(\frac{4}{9})^\beta+\frac{\beta}{4}(\frac{4}{9})^2\Big[(\frac{2\sqrt{10}}{9})^{\beta-2}-(\frac{4}{9})^{\beta-2}\Big]$. We can see that our results are better than the ones in \cite{JZSZ} for $\beta\geq2$, see FIG. \ref{1}.
\begin{figure}%[ht]
  \centering
  % Requires \usepackage{graphicx}
  \includegraphics[width=0.75\textwidth]{a.pdf}\\
  \caption{Dash dotted line ,   $C^\beta_{A|BC}$ as a function of $\beta$ ($2\leq\beta\leq10$); solid line,  the lower bound of $C^\beta_{A|BC}$ as a function of $\beta$ ($2\leq\beta\leq10$) in Eq. (\ref{thm2-e});  dash  line, the lower bound of $C^\beta_{A|BC}$ as a function of $\beta$ ($2\leq\beta\leq10$) in \cite{JZSZ}.}\label{1}
\end{figure}
\end{example}

\section{Tighter Monogamy Relations using EoF}\label{sec3}
Let $H_A$ and $H_B$ be two Hilbert spaces with dimension $m$ and $n$ $(m\leq n)$. The entanglement of formation (EoF) of a pure state $|\phi\rangle_{AB}$ on Hilbert space ${H_A\otimes  H_B}$,  is defined as \cite{Bennett1,CHB}
\begin{eqnarray} \label{def EoF pure}
E(|\phi\rangle_{AB})=S(\rho_A)=-{\rm{Tr}}(\rho_A\log_2\rho_A),
\end{eqnarray}
where $S(\rho)=-\rm{Tr}(\rho\log_2\rho)$ and $\rho_A={\rm Tr}_B(|\phi\rangle_{AB}\langle\phi|)$. For a bipartite mixed state $|\phi\rangle_{AB}$ on Hilbert space ${H_A\otimes  H_B}$, the EoF is given by
\begin{eqnarray}\label{def EoF mixed}
E(\rho_{AB})=\inf_{\{p_i,|\phi_i\rangle\}}\sum_ip_iE(|\phi_i\rangle),
\end{eqnarray}
where the infimum is taken over all possible pure state decompositions of $\rho_{AB}$.

Let $g(x)=H\big(\frac{1+\sqrt{1-x}}{2}\big)$ and $H(x)=-x\log_2x-(1-x)\log_2(1-x)$,
it is obvious that $g(x)$ is a monotonically increasing function for $0\leq x\leq1$, and satisfies
\begin{equation}\label{ine for g}
g^{\sqrt{2}}(x^2+y^2)\geq g^{\sqrt{2}}(x^2)+g^{\sqrt{2}}(y^2),
\end{equation}
where $g^{\sqrt{2}}(x^2+y^2)=[g(x^2+y^2)]^{\sqrt{2}}$.

From Eqs. (\ref{def EoF pure}) and (\ref{def EoF mixed}), we have $E(|\phi\rangle)=g(\mathcal{C}^2(|\phi\rangle))$ for $2\otimes d \ (d\geq2)$ pure state $|\phi\rangle$. And $E(\rho)=g(\mathcal{C}^2(\rho))$ for arbitrary two-qubit mixed state $\rho$ \cite{Wootters}.

Wootters \cite{Coffman} shows that the EoF dose not satisfy the monogamy inequality $E_{AB}+E_{AC}\leq E_{A|BC}$. In \cite{BZYW}, the authors shows that EoF is a monotonic function satisfying $E^2(C_{A|B_1B_2\cdots B_{N-1}}^2)\geq E^2 \sum_{i=1}^{N-1}(C_{AB_i}^2)$. For $N$-qubit systems, one has \cite{Zhu}
\begin{equation}\label{EoF1}
E^\beta_{A|B_1B_2\cdots B_{N-1}}\geq E^\beta_{AB_1}+E^\beta_{AB_2}+\cdots+E^\beta_{AB_{N-1}},
\end{equation}
for $\beta\geq\sqrt{2}$, where $E_{A|B_1B_2\cdots B_{N-1}}$ is the EoF of $\rho$ under bipartite partition $A|B_1B_2\cdots B_{N-1}$,
$E_{AB_i}$ is the EoF of the mixed state $\rho_{AB_i}={\rm Tr}_{B_1\cdots B_{i-1},B_{i+1}\cdots B_{N-1}}(\rho)$
for $i=1,2,\ldots,N-1$.

\begin{lemma}\label{lemma3}
For any mixed state $\rho_{ABC}$ in a $2\otimes2\otimes2^{N-2}$ system, $\beta\geq\sqrt{2}$, if $C_{AB}\geq C_{AC}$, then we have
\begin{equation}
\begin{array}{rl}
E_{A|BC}^\beta\geq & E_{AB}^\beta+hE_{AC}^\beta+\frac{t}{2}E_{AC}^{\sqrt{2}}(E_{AB}^{\beta-{\sqrt{2}}}-E_{AC}^{\beta-{\sqrt{2}}})\\
& +\frac{(t-1)^2}{4}E_{AC}^{2\sqrt{2}}(E_{AB}^{\beta-2\sqrt{2}}+E_{AC}^{\beta-2\sqrt{2}}-2E_{AC}^{\beta-\sqrt{2}}E_{AB}^{-\sqrt{2}}),
\end{array}
\end{equation}
where $t=\frac{\beta}{\sqrt{2}}, h=2^t-1$.
\end{lemma}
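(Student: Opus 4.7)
The plan is to mirror the proof of Lemma \ref{lemma2}, replacing the CKW-type starting inequality for concurrence with its EoF analogue at exponent $\sqrt{2}$, and then applying Lemma \ref{lemma1} with parameter $t=\beta/\sqrt{2}$.

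First I would promote the hypothesis $C_{AB}\ge C_{AC}$ to $E_{AB}\ge E_{AC}$ via the monotonicity of $g$ on $[0,1]$. Next I would invoke the $N=3$ instance of (\ref{EoF1}),
\[ E_{A|BC}^{\sqrt{2}}\ge E_{AB}^{\sqrt{2}}+E_{AC}^{\sqrt{2}}, \]
which is the qubit EoF-CKW bound obtained by combining $C_{A|BC}^{2}\ge C_{AB}^{2}+C_{AC}^{2}$, the identity $E=g(C^{2})$ for two-qubit states, the monotonicity of $g$, and inequality (\ref{ine for g}). Raising to the power $t:=\beta/\sqrt{2}\ge 2$ and factoring out $E_{AB}^\beta$ would then give
\[ E_{A|BC}^\beta\ge\bigl(E_{AB}^{\sqrt{2}}+E_{AC}^{\sqrt{2}}\bigr)^{t}=E_{AB}^{\beta}\Bigl(1+\frac{E_{AC}^{\sqrt{2}}}{E_{AB}^{\sqrt{2}}}\Bigr)^{\!t}. \]

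Setting $x:=(E_{AC}/E_{AB})^{\sqrt{2}}\in[0,1]$, I would apply the first inequality in Lemma \ref{lemma1} to expand $(1+x)^t$. Multiplying by $E_{AB}^\beta$ turns the three contributions $\frac{t}{2}x$, $\frac{t(t-1)}{2}x^{2}$, and $\bigl(2^{t}-\frac{t}{2}-\frac{t(t-1)}{2}-1\bigr)x^{t}$ into $\frac{t}{2}E_{AB}^{\beta-\sqrt{2}}E_{AC}^{\sqrt{2}}$, $\frac{t(t-1)}{2}E_{AB}^{\beta-2\sqrt{2}}E_{AC}^{2\sqrt{2}}$, and $\bigl(2^{t}-\frac{t}{2}-\frac{t(t-1)}{2}-1\bigr)E_{AC}^\beta$ respectively, the last identity using $\sqrt{2}\,t=\beta$. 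A short rearrangement then absorbs the two correction terms $\frac{t}{2}E_{AC}^\beta$ and $\frac{t(t-1)}{2}E_{AC}^\beta$ into the telescoped differences $\frac{t}{2}E_{AC}^{\sqrt{2}}(E_{AB}^{\beta-\sqrt{2}}-E_{AC}^{\beta-\sqrt{2}})$ and $\frac{t^{2}-t}{2}E_{AC}^{2\sqrt{2}}(E_{AB}^{\beta-2\sqrt{2}}-E_{AC}^{\beta-2\sqrt{2}})$, leaving exactly $2^{t}-1=h$ as the coefficient of $E_{AC}^\beta$; this reproduces the displayed inequality.

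The only non-mechanical step is the justification of the base inequality $E_{A|BC}^{\sqrt{2}}\ge E_{AB}^{\sqrt{2}}+E_{AC}^{\sqrt{2}}$ in a $2\otimes 2\otimes 2^{N-2}$ system; once that is in hand, the rest is a direct transcription of the Lemma \ref{lemma2} calculation, with $t=\beta/\sqrt{2}$ playing the role that $\beta/2$ plays there. The hypothesis $\beta\ge 2\sqrt{2}$ enters precisely to guarantee $t\ge 2$, which is the condition required by Lemma \ref{lemma1}.
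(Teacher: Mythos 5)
Your proposal is correct and is exactly the argument the paper intends: the paper's own ``proof'' of this lemma is the single sentence that it is similar to the proof of Lemma \ref{lemma2}, and your write-up is precisely that transcription, with the base inequality $E_{A|BC}^{\sqrt{2}}\ge E_{AB}^{\sqrt{2}}+E_{AC}^{\sqrt{2}}$ (from $C_{A|BC}^{2}\ge C_{AB}^{2}+C_{AC}^{2}$, $E=g(C^2)$, and inequality (\ref{ine for g})) replacing $C_{A|BC}^{2}\ge C_{AB}^{2}+C_{AC}^{2}$, and Lemma \ref{lemma1} applied at $t=\beta/\sqrt{2}\ge 2$. You correctly identify that $\beta\ge 2\sqrt{2}$ is exactly the condition $t\ge 2$ needed for Lemma \ref{lemma1}, and your algebraic bookkeeping of the correction terms matches the stated bound.
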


\begin{proof}
The proof is similar to  the proof of Lemma \ref{lemma2}.
\end{proof}

In fact, the result can be generalized to  $N$-qubit mixed state $\rho_{AB_1\cdots B_{N-1}}$. The following theorem holds for $\rho_{AB_1\cdots B_{N-1}}$.

\begin{theorem}
For any $N$-qubit mixed state $\rho_{AB_1\cdots B_{N-1}}$, if $C_{AB_i}\geq C_{A|B_{i+1}\cdots B_{N-1}}$ for $i=1,2,\ldots,N-2$,  we have
\begin{equation}\label{thm3-e}
\begin{array}{rl}
E_{A|B_1\cdots B_{N-1}}^\beta\geq\sum\limits_{i=1}^{N-2}h^{i-1}(E_{AB_i}^\beta+Q_{AB_i})+h^{N-2} E_{AB_{N-1}}^\beta,
\end{array}
\end{equation}
for $\beta\geq\sqrt{2}$, where $h=2^{t}-1$, $t=\frac{\beta}{\sqrt{2}}$, $Q_{AB_i}=\frac{t}{2}(E_{AB_{i+1}}^{\sqrt{2}}+\ldots+E_{AB_{N-1}}^{\sqrt{2}})(E_{AB_i}^{\beta-{\sqrt{2}}}-E_{A|{B_{i+1}\ldots B_{N-1}}}^{\beta-{\sqrt{2}}})+\frac{(t-1)^2}{4}(E_{AB_{i+1}}^{2\sqrt{2}}+\ldots+E_{AB_{N-1}}^{2\sqrt{2}})
[E_{AB_i}^{\beta-{2\sqrt{2}}}+\ldots+E_{AB_{N-1}}^{\beta-{2\sqrt{2}}}-2(E_{A|{B_{i+1}\ldots B_{N-1}}}^{\beta-\sqrt{2}})E_{AB_i}^{-\sqrt{2}}]$.
\end{theorem}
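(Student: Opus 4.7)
The plan is to imitate the proof of Theorem~\ref{thm1} line for line, with the concurrence lemma replaced by the preceding EoF lemma, and with one extra estimate at the end that converts the ``partition'' factors $E_{A|B_{i+1}\cdots B_{N-1}}^{\sqrt{2}}$ and $E_{A|B_{i+1}\cdots B_{N-1}}^{2\sqrt{2}}$ into the sums of bipartite EoFs that appear in $Q_{AB_i}$.

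Concretely, I would first apply the preceding lemma to the tripartite split $A\,|\,B_1\,|\,B_2\cdots B_{N-1}$; the hypothesis $C_{AB_1}\ge C_{A|B_2\cdots B_{N-1}}$ plays the role of $C_{AB}\ge C_{AC}$ there, producing
\[
E_{A|B_1\cdots B_{N-1}}^\beta \;\ge\; E_{AB_1}^\beta + h\,E_{A|B_2\cdots B_{N-1}}^\beta + R_1,
\]
where $R_1=\tfrac{t}{2}E_{A|B_2\cdots B_{N-1}}^{\sqrt{2}}\bigl(E_{AB_1}^{\beta-\sqrt{2}}-E_{A|B_2\cdots B_{N-1}}^{\beta-\sqrt{2}}\bigr)+\tfrac{t^2-t}{2}E_{A|B_2\cdots B_{N-1}}^{2\sqrt{2}}\bigl(E_{AB_1}^{\beta-2\sqrt{2}}-E_{A|B_2\cdots B_{N-1}}^{\beta-2\sqrt{2}}\bigr)$. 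Iterating on the residual $E_{A|B_2\cdots B_{N-1}}^\beta$ term, using in turn each hypothesis $C_{AB_i}\ge C_{A|B_{i+1}\cdots B_{N-1}}$, after $N-2$ peel-offs I expect to arrive at
\[
E_{A|B_1\cdots B_{N-1}}^\beta \;\ge\; \sum_{i=1}^{N-2} h^{i-1}\bigl(E_{AB_i}^\beta+R_i\bigr) + h^{N-2}E_{AB_{N-1}}^\beta,
\]
with $R_i$ defined analogously to $R_1$.

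Next I would convert each $R_i$ into $Q_{AB_i}$. The differences appearing inside $R_i$ are nonnegative: since $g$ is monotonically increasing and $C_{AB_i}\ge C_{A|B_{i+1}\cdots B_{N-1}}$, we have $E_{AB_i}\ge E_{A|B_{i+1}\cdots B_{N-1}}$, so both $E_{AB_i}^{\beta-\sqrt{2}}-E_{A|B_{i+1}\cdots B_{N-1}}^{\beta-\sqrt{2}}$ and $E_{AB_i}^{\beta-2\sqrt{2}}-E_{A|B_{i+1}\cdots B_{N-1}}^{\beta-2\sqrt{2}}$ are $\ge 0$ (using $\beta\ge 2\sqrt{2}$). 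Applying Eq.~(\ref{EoF1}) at exponents $\sqrt{2}$ and $2\sqrt{2}$ (both $\ge\sqrt{2}$) gives
\[
E_{A|B_{i+1}\cdots B_{N-1}}^{\sqrt{2}}\;\ge\;\sum_{j=i+1}^{N-1}E_{AB_j}^{\sqrt{2}},\qquad E_{A|B_{i+1}\cdots B_{N-1}}^{2\sqrt{2}}\;\ge\;\sum_{j=i+1}^{N-1}E_{AB_j}^{2\sqrt{2}}.
\]
Multiplying these lower bounds against the nonnegative differences and summing yields $R_i\ge Q_{AB_i}$, which substituted into the iterated inequality gives the theorem.

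The main obstacle is bookkeeping rather than substance: the iteration is formally identical to the one in Theorem~\ref{thm1}, and the $R_i\to Q_{AB_i}$ step is just one application of the already-quoted Eq.~(\ref{EoF1}). The one delicate point is ensuring that the monotonicity argument for $g$ carries over from the two-qubit EoF $E_{AB_i}$ to the partition EoF $E_{A|B_{i+1}\cdots B_{N-1}}$, so that the signs of the differences inside $R_i$ are correctly controlled; this is precisely why the hypothesis is phrased in terms of concurrences rather than EoFs. I would therefore present the proof as a short reference to Theorem~\ref{thm1}'s iteration, writing out the $R_i\ge Q_{AB_i}$ conversion explicitly and leaving the rest to analogy.
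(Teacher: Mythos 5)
There is a genuine gap at the step where you convert $R_i$ into $Q_{AB_i}$. That conversion multiplies the lower bound $E_{A|B_{i+1}\cdots B_{N-1}}^{\sqrt{2}}\geq\sum_{j=i+1}^{N-1}E_{AB_j}^{\sqrt{2}}$ from Eq. (\ref{EoF1}) against the difference $E_{AB_i}^{\beta-\sqrt{2}}-E_{A|B_{i+1}\cdots B_{N-1}}^{\beta-\sqrt{2}}$, so it is valid only if that difference is nonnegative. You justify the sign by arguing that $C_{AB_i}\geq C_{A|B_{i+1}\cdots B_{N-1}}$ together with the monotonicity of $g$ gives $E_{AB_i}\geq E_{A|B_{i+1}\cdots B_{N-1}}$. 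But the identity $E=g(C^2)$ holds only for two-qubit mixed states (and $2\otimes d$ pure states); for the mixed $2\otimes 2^{N-1-i}$ marginal $\rho_{A|B_{i+1}\cdots B_{N-1}}$ one only has the one-sided bound $E_{A|B_{i+1}\cdots B_{N-1}}\geq g(C^2_{A|B_{i+1}\cdots B_{N-1}})$. Hence the hypothesis yields $E_{AB_i}=g(C^2_{AB_i})\geq g(C^2_{A|B_{i+1}\cdots B_{N-1}})$, which sits on the wrong side and does not imply $E_{AB_i}\geq E_{A|B_{i+1}\cdots B_{N-1}}$. If the difference is negative, replacing the factor $E^{\sqrt{2}}_{A|B_{i+1}\cdots B_{N-1}}$ by the smaller sum $\sum_j E^{\sqrt{2}}_{AB_j}$ increases the term, so $R_i\geq Q_{AB_i}$ can fail and the chain breaks. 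You flagged this as ``the one delicate point'' but did not resolve it, and it cannot be resolved the way you suggest.

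The paper's proof is structured precisely to sidestep this: it never iterates the EoF lemma. Instead it carries $g$ of (sums of) squared concurrences through the entire peeling argument, where every difference has the form $g^{\beta-\sqrt{2}}(C^2_{AB_i})-g^{\beta-\sqrt{2}}(C^2_{A|B_{i+1}\cdots B_{N-1}})$ and its sign is controlled directly by the concurrence hypothesis; the prefactors are split into sums via the superadditivity (\ref{ine for g}) while still at the $g$-level; and only in the very last step are the terms $g^{\beta-\sqrt{2}}(C^2_{AB_j})$ replaced by $E^{\beta-\sqrt{2}}_{AB_j}$ (an equality for two-qubit marginals) and $-g^{\beta-\sqrt{2}}(C^2_{A|B_{i+1}\cdots B_{N-1}})$ bounded below by $-E^{\beta-\sqrt{2}}_{A|B_{i+1}\cdots B_{N-1}}$, a replacement that is valid irrespective of the sign of the resulting EoF difference because the multiplying prefactor is by then a nonnegative sum. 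To repair your argument you would need either to establish $E_{AB_i}\geq E_{A|B_{i+1}\cdots B_{N-1}}$ (which does not follow from the stated hypothesis) or to restructure the proof along the paper's lines, keeping the iteration at the level of $g(C^2(\cdot))$ and converting to EoF only at the end.
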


\begin{proof}
Let $\rho=\sum_ip_i|\psi_i\rangle\langle\psi_i|\in H_A\otimes H_{B_1}\otimes\cdots H_{B_{N-1}}$ be the optimal decomposition of
$E_{A|B_1B_2\cdots B_{N-1}}(\rho)$ for the $N$-qubit mixed state $\rho$, we have \cite{Jin}
\begin{equation}\label{thm3e1}
E_{A|B_1B_2\cdots B_{N-1}}\geq g(C^2_{A|B_1B_2\cdots B_{N-1}}).
\end{equation}

In addition, for $\beta\geq\sqrt{2}$ , we have
\begin{equation}\label{thm3e2}
\begin{array}{rl}
& g^{\beta}(x^2+y^2)\\
=& \Big[g^{\sqrt{2}}(x^2+y^2)\Big]^t\geq  \Big[g^{\sqrt{2}}(x^2)+g^{\sqrt{2}}(y^2)\Big]^t\\
\geq &    g^{\beta}(x^2)+(2^t-1)g^{\beta}(y^2)+\frac{t}{2}g^{\sqrt{2}}(y^2)\Big[g^{\beta-{\sqrt{2}}}(x^2)-g^{\beta-{\sqrt{2}}}(y^2)\Big]\\
&  +\frac{(t-1)^2}{4}g^{2\sqrt{2}}(y^2)\Big[g^{\beta-2\sqrt{2}}(x^2)+g^{\beta-2\sqrt{2}}(y^2)
-2g^{\beta-\sqrt{2}}(y^2)g^{-\sqrt{2}}(x^2)\Big],
\end{array}
\end{equation}
where the first inequality is due to Eq. (\ref{ine for g}), and without  loss of generality, we can assume  $x^2\geq y^2$, then the second inequality is obtained from the monotonicity of $g(x)$ and Eq. (\ref{lemma1-equality}).

Thus,  combining Eqs. (\ref{thm3e1}) and (\ref{thm3e2}), we obtain
\begin{equation}
\begin{array}{rl}
&E^\beta_{A|B_1B_2\cdots B_{N-1}}\\
&\ \ \geq g^\beta(C^2_{AB_1}+C^2_{AB_2}+\ldots +C^2_{AB_{N-1}})\\
&\ \ \geq g^\beta(C^2_{AB_1})+hg^\beta(C^2_{AB_2}+\ldots +C^2_{AB_{N-1}})\\
&\ \ \ \ +\frac{t}{2}g^{\sqrt{2}}(C^2_{AB_2}+\ldots +C^2_{AB_{N-1}})\Big[g^{\beta-{\sqrt{2}}}(C^2_{AB_1})-g^{\beta-{\sqrt{2}}}(C^2_{AB_2}+\ldots +C^2_{AB_{N-1}})\Big]\\
&\ \ \ \ +\frac{(t-1)^2}{4}g^{2\sqrt{2}}(C^2_{AB_2}+\ldots +C^2_{AB_{N-1}})\Big[g^{\beta-2\sqrt{2}}(C^2_{AB_1})+g^{\beta-2\sqrt{2}}(C^2_{AB_2}+\ldots +C^2_{AB_{N-1}})\\
&\ \ \ \ -2g^{\beta-\sqrt{2}}(C^2_{AB_2}+\ldots +C^2_{AB_{N-1}})g^{-\sqrt{2}}(C^2_{AB_1})\Big]\\
&\ \ \geq g^\beta(C^2_{AB_1})+hg^\beta(C^2_{AB_2}+\ldots+C^2_{AB_{N-1}})\\
&\ \  \ \ +\frac{t}{2}\Big[g^{\sqrt{2}}(C^2_{AB_2})+\ldots+g^{\sqrt{2}}(C^2_{AB_{N-1}})\Big]\cdot\Big[g^{\beta-{\sqrt{2}}}(C^2_{AB_1})-g^{\beta-{\sqrt{2}}}(C^2_{A|B_2\cdots B_{N-1}})\Big]\\
&\ \ \ \ +\frac{(t-1)^2}{4}\Big[g^{2\sqrt{2}}(C^2_{AB_2})+\ldots+g^{2\sqrt{2}}(C^2_{AB_{N-1}})\Big]\cdot
\Big[g^{\beta-{2\sqrt{2}}}(C^2_{AB_1})+\ldots+g^{\beta-{2\sqrt{2}}}(C^2_{AB_{N-1}})\\
&\ \ \ \ -2g^{\beta-\sqrt{2}}(C^2_{A|B_2\cdots B_{N-1}})g^{-\sqrt{2}}(C^2_{AB_1})\Big]\\
&\ \ \geq g^\beta(C^2_{AB_1})+hg^\beta(C^2_{AB_2})+\ldots+h^{N-2}g^\beta(C^2_{AB_{N-1}})\\
&\ \ \ \ +h^{N-3}\cdot\frac{t}{2}\cdot g^{\sqrt{2}}(C^2_{AB_{N-1}})\Big[g^{\beta-{\sqrt{2}}}(C^2_{AB_{N-2}})-g^{\beta-{\sqrt{2}}}(C^2_{AB_{N-1}})\Big]+\ldots\\
&\ \ \ \ +\frac{t}{2}\Big[g^{\sqrt{2}}(C^2_{AB_2})+\ldots+g^{\sqrt{2}}(C^2_{AB_{N-1}})\Big]\cdot\Big[g^{\beta-{\sqrt{2}}}(C^2_{AB_1})-g^{\beta-{\sqrt{2}}}(C^2_{A|B_2\cdots B_{N-1}})\Big]\\
&\ \ \ \ +h^{N-3}\cdot\frac{(t-1)^2}{4}\cdot g^{2\sqrt{2}}(C^2_{AB_{N-1}})\Big[g^{\beta-{2\sqrt{2}}}(C^2_{AB_{N-2}})+g^{\beta-{2\sqrt{2}}}(C^2_{AB_{N-1}})\\
&\ \ \ \ -2g^{\beta-{\sqrt{2}}}(C^2_{AB_{N-1}})g^{-\sqrt{2}}(C^2_{AB_{N-2}})\Big]+\ldots\\
&\ \ \ \ +\frac{(t-1)^2}{4}\Big[g^{2\sqrt{2}}(C^2_{AB_2})+\ldots+g^{2\sqrt{2}}(C^2_{AB_{N-1}})\Big]\cdot
\Big[g^{\beta-{2\sqrt{2}}}(C^2_{AB_1})+\ldots+g^{\beta-{2\sqrt{2}}}(C^2_{AB_{N-1}})\\
&\ \ \ \ -2g^{\beta-{\sqrt{2}}}(C^2_{A|B_2\cdots B_{N-1}})g^{-\sqrt{2}}(C^2_{AB_1})\Big],\\
\end{array}
\end{equation}
where we have utilized the Eq. (\ref{Con1}) and the  monotonicity of $g(x)$ to obtain the first inequality, the third and the forth inequalities are due to the Eq. (\ref{ine for g}) and the monotonicity of the function $g(x)$.

According to Eq. (\ref{thm3e1}) and the fact that $g(\mathcal{C}^2(\rho))=E(\rho)$ for arbitrary two-qubit mixed state $\rho$, we obtain Eq. (\ref{thm3-e}).
\end{proof}

\begin{example}\label{ex2}
 Let us consider the state in (\ref{Con6}) given in Example \ref{ex1}. Set $\lambda_0=\frac{\sqrt{6}}{3},\lambda_1=0,\lambda_2=\frac{\sqrt{2}}{3},\lambda_3=\frac{1}{3},\lambda_4=0$, we have $E_{A|BC}=0.91829$, $E_{AB}=0.68193$, $E_{AC}=0.40416$.
Then $E_{A|BC}^\beta=(0.91829)^\beta\geq E_{AB}^\beta+hE_{AC}^\beta+\frac{\beta}{2\sqrt{2}}E_{AC}^{\sqrt{2}}(E_{AB}^{\beta-{\sqrt{2}}}-E_{AC}^{\beta-{\sqrt{2}}})+
\frac{(\beta-\sqrt{2})^2}{8}E_{AC}^{2\sqrt{2}}(E_{AB}^{\beta-2\sqrt{2}}+E_{AC}^{\beta-2\sqrt{2}}-2E_{AC}^{\beta-\sqrt{2}}E_{AB}^{-\sqrt{2}})
=(0.68193)^\beta+h(0.40416)^\beta\\
+\frac{\beta}{2\sqrt{2}}(0.40416)^{\sqrt{2}}\Big[(0.68193)^{\beta-{\sqrt{2}}}-(0.40416)^{\beta-{\sqrt{2}}}\Big]+\frac{(\beta-\sqrt{2})^2}{8}(0.40416)^{2\sqrt{2}}\Big[(0.68193)^{\beta-2\sqrt{2}}+(0.40416)^{\beta-2\sqrt{2}}-2 (0.40416)^{\beta-\sqrt{2}}(0.68193)^{-\sqrt{2}}\Big]$. While   the result in \cite{JZSZ} is $E_{AB}^\beta+hE_{AC}^\beta+\frac{\beta}{2\sqrt{2}}E_{AC}^{\sqrt{2}}(E_{AB}^{\beta-{\sqrt{2}}}-E_{AC}^{\beta-{\sqrt{2}}})=(0.68193)^\beta+h(0.40416)^\beta+\frac{\beta}{2\sqrt{2}}(0.40416)^{\sqrt{2}}\Big[(0.68193)^{\beta-{\sqrt{2}}}-(0.40416)^{\beta-{\sqrt{2}}}\Big] $. We can see that our results are better than the ones in \cite{JZSZ}, see FIG. \ref{2}.

\begin{figure}
\centering
\includegraphics[width=0.75\textwidth]{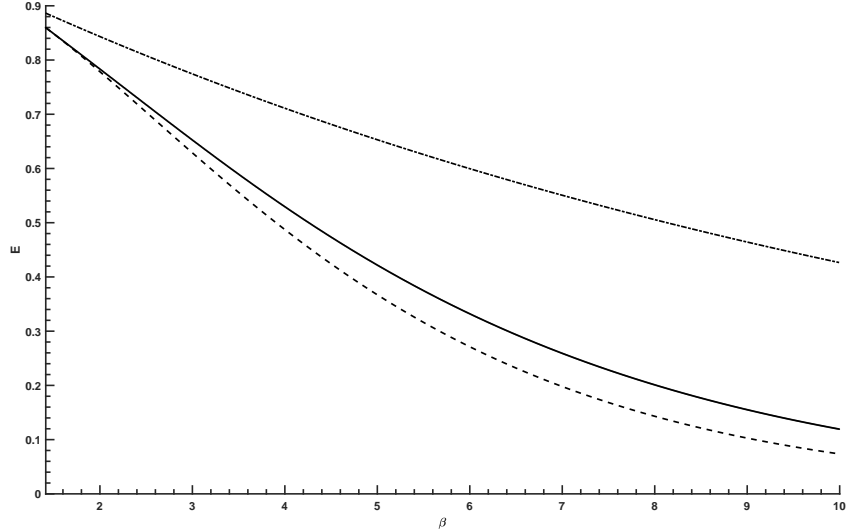}
\caption{Dash dotted line ,   $E^\beta_{A|BC}$ as a function of $\beta$ ($\sqrt{2}\leq\beta\leq10$); solid line,  the lower bound of $E^\beta_{A|BC}$ as a function of $\beta$ ($\sqrt{2}\leq\beta\leq10$) in Eq. (\ref{thm3-e});  dash  line, the lower bound of $E^\beta_{A|BC}$ as a function of $\beta$ ($\sqrt{2}\leq\beta\leq10$) in \cite{JZSZ}.}
\label{2}
\end{figure}
\end{example}

\section{Tighter Monogamy Relations using Negativity}\label{sec4}
The negativity is a well-known quantifier of bipartite entanglement. Given a bipartite state $\rho_{AB}$ in Hilbert space $H_A\otimes H_B$, the negativity is defined as \cite{Vidal}
\begin{eqnarray}
\N(\rho_{AB})=\frac{\|\rho_{AB}^{T_A}\|-1}{2},
\end{eqnarray}
where $\rho_{AB}^{T_A}$ is the partial transposed matrix of $\rho_{AB}$ with respect to the subsystem $A$
and $\|X\|$ denotes the trace norm of $X$, i.e. $\|X\|={\rm Tr}\sqrt{XX^{\dag}}$.
For convenience, we use the definition of negativity as $\|\rho_{AB}^{T_A}\|-1$ \cite{Kim}.

If a bipartite pure state
$|\phi\rangle_{AB}$ with the Schmidt decomposition,
$|\phi\rangle_{AB} = \sum_{i}\sqrt{\lambda_i}|ii\rangle$, $\lambda_i\geq 0$, $\sum_{i}\lambda_i = 1$,
then \cite{Kim}
\begin{equation}\label{N}
\mathcal{N}(|\phi\rangle_{AB}) = 2\sum_{i< j}\sqrt{\lambda_i\lambda_j}.
\end{equation}
From the definition of concurrence (\ref{def of C}), we have
\begin{equation}
C(|\phi\rangle_{AB}) = 2\sqrt{\sum_{i< j}\lambda_i\lambda_j}.
\end{equation}
As a consequence, for any bipartite pure state $|\phi\rangle_{AB}$ with Schmidt rank $2$, one has $\mathcal{N}(|\phi\rangle_{AB})=C(|\phi\rangle_{AB})$.

For a mixed state $\rho_{AB}$, the convex-roof extended negativity (CREN) is given by
\begin{equation}
\N_c(\rho_{AB})=\min\limits_{\{p_i,|\phi_i\rangle\}}\sum_{i}p_i\N(|\phi_i\rangle),
\end{equation}
where the minimum is taken over all possible pure state decomposition of $\rho_{AB}$. CREN gives a perfect discrimination between PPT bound entangled states and separable states in any bipartite quantum system \cite{Lee}.
It follows that for any $2\otimes d ~(d\geq 2)$ mixed state $\rho_{AB}$, we have
\begin{equation}
\N_c(\rho_{AB})=\min\limits_{\{p_i,|\phi_i\rangle\}}\sum_{i}p_i\N(|\phi_i\rangle)=\min\limits_{\{p_i,|\phi_i\rangle\}}\sum_{i}p_iC(|\phi_i\rangle)=C(\rho_{AB}).
\end{equation}

According to the relation between CREN and concurrence, we have the following results for the lower bound of $\N_{c{A|B_1\cdots B_{N-1}}}^{\beta}$.

\begin{theorem}\label{thm4}
For any $N$-qubit mixed state $\rho_{AB_1\cdots B_{N-1}}$, if $\N_{c{AB_i}}\geq \N_{c{AB_{i+1}\cdots B_{N-1}}}$ for $i=1,2,\ldots,m$, and $\N_{c{AB_j}}\leq \N_{c{A|B_{j+1}\cdots B_{N-1}}}$ for $j=m+1,\ldots,N-2$, $\forall ~ 1\leq m\leq N-3$, then we have
\begin{equation}
\begin{array}{rl}
\N_{c{A|B_1\cdots B_{N-1}}}^\beta\geq\sum\limits_{i=1}^{m}h^{i-1}(\N_{c{AB_i}}^\beta+R_{AB_i})+
h^m\sum\limits_{j=m+1}^{N-2}(h\N_{c{AB_j}}^\beta+R_{AB_j}^1)+h^m \N_{c{AB_{N-1}}}^\beta,
\end{array}
\end{equation}
for all $N\geq4$,  $\beta\geq2$, where $h=2^{\frac{\beta}{2}}-1$, $R_{AB_i}=\frac{\beta}{4}\N_{c{A|B_{i+1}\ldots B_{N-1}}}^2(\N_{c{AB_i}}^{\beta-2}-\N_{c{A|B_{i+1}\ldots B_{N-1}}}^{\beta-2})+
\frac{(\beta-2)^2}{16}\N_{c{A|B_{i+1}\ldots B_{N-1}}}^4(\N_{c{AB_i}}^{\beta-4}+\N_{c{A|B_{i+1}}\ldots B_{N-1}}^{\beta-4}-2\N_{c{A|B_{i+1}\ldots B_{N-1}}}^{\beta-2}\N_{c{AB_i}}^{-2})$,
$R_{AB_j}^1=\frac{\beta}{4}\N_{c{AB_j}}^2\\
(\N_{c{A|B_{j+1}\ldots B_{N-1}}}^{\beta-2}-\N_{c{AB_j}}^{\beta-2})+\frac{(\beta-2)^2}{16}\N_{c{AB_j}}^4(\N_{c{A|B_{j+1}\ldots B_{N-1}}}^{\beta-4}+\N_{c{AB_j}}^{\beta-4}-2\N_{c{AB_j}}^{\beta-2}\N_{c{A|B_{j+1}\ldots B_{N-1}}}^{-2})$.
\end{theorem}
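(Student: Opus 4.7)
The plan is to mirror the proof strategy of Theorem~2 (the concurrence version), exploiting the equality $\N_c(\rho)=C(\rho)$ valid for any $2\otimes d$ ($d\geq 2$) mixed state. In an $N$-qubit system every two-party reduced state $\rho_{AB_i}$ is two-qubit and every bipartite cut $A|B_k\cdots B_{N-1}$ is of $2\otimes 2^{N-k}$ type, so each CREN appearing in the theorem coincides with the corresponding concurrence, and the hypotheses on CREN translate verbatim to concurrence hypotheses.

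The first step is to establish a CREN analog of Lemma~\ref{lemma2}: for any mixed state $\rho_{ABC}$ in a $2\otimes 2\otimes 2^{N-2}$ system with $\N_{c{AB}}\geq \N_{c{AC}}$ and $\beta\geq 4$,
\begin{equation*}
\N_{c{A|BC}}^\beta \geq \N_{c{AB}}^\beta + h\N_{c{AC}}^\beta + \frac{\beta}{4}\N_{c{AC}}^2\bigl(\N_{c{AB}}^{\beta-2}-\N_{c{AC}}^{\beta-2}\bigr) + \frac{\beta(\beta-2)}{8}\N_{c{AC}}^4\bigl(\N_{c{AB}}^{\beta-4}-\N_{c{AC}}^{\beta-4}\bigr),
\end{equation*}
with $h=2^{\beta/2}-1$. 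This is immediate from Lemma~\ref{lemma2} applied to the concurrences together with the $\N_c=C$ identity; the only ingredient to note is the CKW-type bound $\N_c^2(\rho_{A|BC})\geq \N_c^2(\rho_{AB})+\N_c^2(\rho_{AC})$, which again follows for free through the identity.

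The second step is to iterate this lemma exactly as in Theorem~2. For $i=1,\dots,m$ the hypothesis $\N_{c{AB_i}}\geq \N_{c{A|B_{i+1}\cdots B_{N-1}}}$ allows the lemma to be applied with $(B,C)=(B_i,\,B_{i+1}\cdots B_{N-1})$; unrolling $m$ times yields
\begin{equation*}
\N_{c{A|B_1\cdots B_{N-1}}}^\beta \geq \sum_{i=1}^{m} h^{i-1}\bigl(\N_{c{AB_i}}^\beta + R_{AB_i}\bigr) + h^m\,\N_{c{A|B_{m+1}\cdots B_{N-1}}}^\beta.
\end{equation*}
For $j=m+1,\dots,N-2$ the reversed inequality $\N_{c{AB_j}}\leq \N_{c{A|B_{j+1}\cdots B_{N-1}}}$ requires swapping the roles of the two subsystems in the lemma, and iterating that variant gives
\begin{equation*}
\N_{c{A|B_{m+1}\cdots B_{N-1}}}^\beta \geq \sum_{j=m+1}^{N-2}\bigl(h\,\N_{c{AB_j}}^\beta + R_{AB_j}^1\bigr) + \N_{c{AB_{N-1}}}^\beta.
\end{equation*}
Substituting this into the previous chain and multiplying by $h^m$ produces the claimed bound.

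I do not anticipate a genuine obstacle: the combinatorial bookkeeping is identical to Theorem~2 with $C$ replaced by $\N_c$ throughout, and no new analytic inequality beyond Lemma~\ref{lemma1} is needed. The only point that warrants explicit verification is that the $2\otimes d$ equality $\N_c=C$ remains applicable at every level of the recursion, which it does because the distinguished party $A$ stays a single qubit at each split, keeping every relevant cut within the $2\otimes d$ regime.
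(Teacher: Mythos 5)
Your proposal is correct and is exactly the argument the paper intends: the paper states this theorem without a written proof, justifying it solely by the identity $\N_c(\rho)=C(\rho)$ for $2\otimes d$ mixed states and the already-proved concurrence version (Theorem~2), which is precisely the reduction you carry out. Your added remark that every cut in the recursion keeps $A$ as a single qubit, so the identity applies throughout, is the one point the paper leaves implicit and is worth stating.
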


\begin{theorem}\label{thm5}
For any $N$-qubit mixed state $\rho_{AB_1\cdots B_{N-1}}$, if $\N_{c{AB_i}}\geq \N_{c{A|B_{i+1}\cdots B_{N-1}}}$ for $i=1,2,\ldots,N-2$, then we can obtain
\begin{equation}\label{thm5-e}
\begin{array}{rl}
\N_{c{A|B_1\cdots B_{N-1}}}^\beta\geq\sum\limits_{i=1}^{N-2}h^{i-1}(\N_{c{AB_i}}^\beta+R_{AB_i})+h^{N-2} \N_{c{AB_{N-1}}}^\beta,
\end{array}
\end{equation}
for all $N\geq3$,  $\beta\geq2$, where $h=2^{\frac{\beta}{2}}-1$, $R_{AB_i}=\frac{\beta}{4}\N_{c{A|B_{i+1}\ldots B_{N-1}}}^2(\N_{c{AB_i}}^{\beta-2}-\N_{c{A|B_{i+1}\ldots B_{N-1}}}^{\beta-2})+
\frac{(\beta-2)^2}{16}\N_{c{A|B_{i+1}\ldots B_{N-1}}}^4(N_{c{AB_i}}^{\beta-4}+\N_{c{A|B_{i+1}\ldots B_{N-1}}}^{\beta-4}-2\N_{c{A|B_{i+1}\ldots B_{N-1}}}^{\beta-2}\N_{c{AB_i}}^{-2})$.
\end{theorem}

\begin{example} \label{ex3}
Let us consider the state in (\ref{Con6}) given in Example \ref{ex1}. We have $\N_{cA|BC}=2\lambda_0\sqrt{\lambda_2^2+\lambda_3^2+\lambda_4^2}$, $\N_{cAB}=2\lambda_0\lambda_2$ and
$\N_{cAC}=2\lambda_0\lambda_3$. Set $\lambda_0=\frac{\sqrt{2}}{3},\lambda_1=0,\lambda_2=\frac{\sqrt{5}}{3},\lambda_3=\frac{\sqrt{2}}{3},\lambda_4=0$.
 We have $\N_{c{A|BC}}^\beta\geq \N_{c{AB}}^\beta+h\N_{c{AC}}^\beta+\frac{\beta}{4}\N_{c{AC}}^2(\N_{c{AB}}^{\beta-2}-\N_{c{AC}}^{\beta-2})+\frac{(\beta-2)^2}{16}\N_{c{AC}}^4(\N_{c{AB}}^{\beta-4}+\N_{c{AC}}^{\beta-4}
-2\N_{c{AC}}^{\beta-2}\N_{c{AB}}^{-2})
=(\frac{2\sqrt{10}}{9})^\beta+h(\frac{4}{9})^\beta+\frac{\beta}{4}(\frac{4}{9})^2\Big[(\frac{2\sqrt{10}}{9})^{\beta-2}-(\frac{4}{9})^{\beta-2}\Big]+\frac{(\beta-2)^2}{16}(\frac{4}{9})^4\Big[(\frac{2\sqrt{10}}{9})^{\beta-4}+(\frac{4}{9})^{\beta-4}-2(\frac{4}{9})^{\beta-2}(\frac{2\sqrt{10}}{9})^{-2})$.
While  the result in\cite{JZSZ} is $\N_{cAB}^\beta+h\N_{cAC}^\beta+\frac{\beta}{4}\N_{cAC}^2(\N_{cAB}^{\beta-2}-\N_{cAC}^{\beta-2})=(\frac{2\sqrt{10}}{9})^\beta+h(\frac{4}{9})^\beta+\frac{\beta}{4}(\frac{4}{9})^2\Big[(\frac{2\sqrt{10}}{9})^{\beta-2}-(\frac{4}{9})^{\beta-2}\Big]$. We can see that our result is better than the one in \cite{JZSZ} for $\beta\geq2$, see FIG. \ref{3}.

\begin{figure}
\centering
\includegraphics[width=0.75\textwidth]{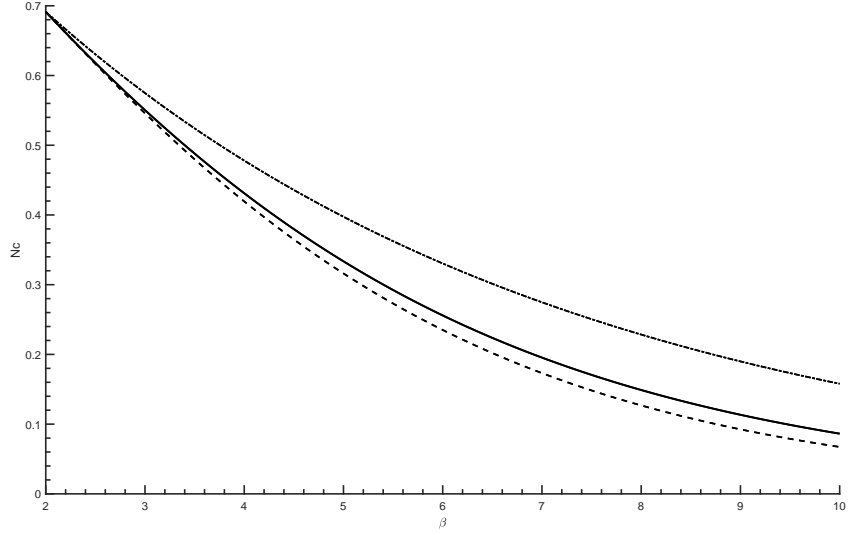}
\caption{Dash dotted line ,   $\N^\beta_{cA|BC}$ as a function of $\beta$ ($2\leq\beta\leq10$); solid line,  the lower bound of $\N^\beta_{cA|BC}$ as a function of $\beta$ ($2\leq\beta\leq10$) in Eq. (\ref{thm5-e});  dash  line, the lower bound of $\N^\beta_{cA|BC}$ as a function of $\beta$ ($2\leq\beta\leq10$) in \cite{JZSZ}.}
\label{3}
\end{figure}
\end{example}

\section{Conclusion}\label{sec5}
Entanglement monogamy  relations are fundamental properties of multipartite entangled states. In this paper, we  have provided
 the multipartite entanglement based on the monogamy relations for $\beta$th power of concurrence $C_{A|B_1\ldots B_{N-1}}^{\beta}$ ($\beta\geq 2$),  entanglement of formation $E_{A|B_1\ldots B_{N-1}}^{\beta}$ ($\beta\geq \sqrt{2}$) and convex-roof extended negativity $\N_{cA|B_1\ldots B_{N-1}}^{\beta}$ ($\beta\geq 2$).
Our monogamy relations have larger lower bounds and are tighter than the  existing results \cite{JZSZ}.
 These tighter monogamy inequalities can also provide a finer description of the entanglement distribution. In  multi-qubit system, our research results provide a rich reference for future research on multi-party quantum entanglement. Our method can also be applied to the study of other properties of monogamy related to quantum correlations.

\section*{Acknowledgments}
This work is  supported by the Yunnan Provincial Research Foundation for Basic Research, China (Grant No. 202001AU070041), the Research Foundation of Education Bureau of Yunnan Province, China (Grant No. 2021J0054), the Basic and Applied Basic Research Funding Program of Guangdong Province (Grant No. 2019A1515111097), the Natural Science Foundation of Kunming University of Science and Technology (Grant No. KKZ3202007036, KKZ3202007049).

\section*{Data Availability Statement}

All data generated or analysed during this study are included in this submitted article.

\end{document}